\newtheorem{lemma}{Lemma}
\begin{document}

\title{Throughput Maximization for Ambient Backscatter Communication: A Reinforcement Learning Approach}
\author
{\IEEEauthorblockN{Xiaokang Wen, Suzhi Bi$^{*}$, Xiaohui Lin, Lina Yuan, and Juan Wang}\\
    \IEEEauthorblockA
    {
    College of Information Engineering, Shenzhen University\\
    Shenzhen, Guangdong, 518060, China\\
    Email: wenxiaokang2016@email.szu.edu.cn, \{bsz, xhlin, yln, juanwang\}@szu.edu.cn
    }
}

\maketitle
\thispagestyle{empty}

\begin{abstract}
Ambient backscatter (AB) communication is an emerging wireless communication technology that enables wireless devices (WDs) to communicate without requiring active radio transmission. In an AB communication system, a WD switches between communication and energy harvesting modes. The harvested energy is used to power the devices operations, e.g., circuit power consumption and sensing operation. In this paper, we focus on maximizing the throughput performance of AB communication system by adaptively selecting the operating mode under fading channel environment. We model the problem as an infinite-horizon Markov Decision Process (MDP) and accordingly obtain the optimal mode switching policy by the value iteration algorithm given the channel distributions. Meanwhile, when the knowledge of channel distribution is absent, a Q-learning (QL) method is applied to explore a suboptimal strategy through device repeated interaction with the environment. Finally, our simulations show that the proposed QL method can achieve close-to-optimal throughput performance and significantly outperforms the other than representative benchmark methods.
\end{abstract}
\begin{IEEEkeywords}
Ambient backscatter communication, Markov decision process, reinforcement learning, Q-learning.
\end{IEEEkeywords}
\vspace{-2ex}
\IEEEpeerreviewmaketitle

\section{Introduction}
%\IEEEPARstart{T}{he}
The future Internet of thing (IoT) technology interconnects numerous sensing devices with communications capability for a wide range of applications, e.g., remote monitoring, automatic control, diagnosis and maintenance \cite{2010:Tan}. Recently, a new communication paradigm named ambient backscatter (AB) communication is widely studied as an energy-efficient method applicable in IoT system \cite{2013:Liu}. In particular, a tag transmitter in AB communication system communicates with its receiver by backscattering its ambient radio frequency (RF) signals. Specifically, a transmitter tag transmits `0' or `1' by switching its antenna to non-reflecting or reflecting mode, respectively. Compared to the conventional backscatter communication scheme in radio frequency identification (RFID) systems, AB communication does not require a dedicated energy-emitting reader, and relies solely on external energy sources in the ambient environment, such as WiFi, public radio, and cellular transmit power. As such, the application of AB communication can effectively reduce the deployment cost of large-size IoT network, such as smart homes, smart cities, and environment monitoring \cite{2017:Khan}, \cite{2017:Hui}, \cite{2017:Alsinglawi}.

There has been tremendous research interests recently on ambient backscatter communications \cite{2013:A}, \cite{2013:M}. For instance, \cite{2015:Lu} analyzed the bit error rate of an AB communication link when the receiver uses an energy detector to detect the 1-bit information transmitted per channel use. \cite{2017:Hoang} integrates the AB communication with conventional harvest-then-transmit (HTT) protocol in the radio frequency-powered cognitive radio networks, where the backscatter tag can choose to backscatter the ambient RF signal to the receiver or harvest energy for later active transmissions. To achieve the optimal throughput performance, the authors assume a fixed channel model and optimize the time allocation on backscattering communication, energy harvesting, and active information transmissions. AB communication has also been integrated in wireless powered communication network, where the wireless devices' information transmissions are powered by means of wireless power transfer \cite{2016:Bi}, \cite{2018:Bi}. For instance, \cite{2018:Zheng}, \cite{2018:Xu} consider using backscatter communication to reuse wireless power transfer for simultaneous energy harvesting and information exchange between two cooperating users, It shows that the use of passive backscatter-assisted cooperation can significantly improve system throughput performance compared to conventional active information transmissions.

The above studies mostly focus on the system performance optimization under given channel state within a time slot, while the channel fading effects across consecutive wireless channels are not considered. In practice, the wireless channel fading cause the ambient signal strength to vary over time, which directly results in a time-varying communication performance. In general, the choice of current operating mode, i.e., backscattering communication or harvesting energy, to maximize the data rate depends on several factors, such as the current channel conditions, battery energy, and the circuit consumption, etc. However, the dynamic operating mode selection problem in fading channel environment has not been well addressed so far.

In this paper, we concentrate on maximizing the long-term average throughput of an AB communication system by optimizing the real-time operating mode strategy of a transmitter tag in fading channel. Particularly, we model it as an infinite multi-stage decision problem and formulate an infinite-horizon Markov Decision Process (MDP) problem. When the channel distribution is known, we apply the value iteration method to obtain the optimal decision strategy. In practice, however, such channel state knowledge is often hard to obtain, and accordingly, we propose a Q-learning (QL) based reinforcement learning method that obtains a sub-optimal strategy without knowing the channel distribution. Simulation results show that the QL method produces close-to-optimal throughput performance, and significantly outperforms the other representative benchmark methods.

\section{System Model}
\subsection{Channel Model}
As shown in Fig.~\ref{Fig.1}, we consider an AB communication system consisting of one RF source, and a pair of AB transmitter and receiver, all of which are equipped with single antenna. All the channels are assumed to follow quasi-static flat-fading, such that all the channels coefficients remain constant during each block transmission time $T_0$, but can vary from different blocks. The channel coefficients, between the RF source and the tag, between the RF source and the receiver, and between the tag and the receiver, are denoted by $\alpha_{st}$, $\alpha_{sr}$, and $\alpha_{tr}$, respectively. Correspondingly, we use $g=|\alpha_{st}|^2$, $f=|\alpha_{sr}|^2$, and $h=|\alpha_{tr}|^2$ to denote their channel gains, separately, where $|.|$ represents the 2-norm operator.
\begin{figure}
  \centering
   \begin{center}
      \includegraphics[width=0.45\textwidth]{fig1.eps}
   \end{center}
  \caption{The considered AB communication system. }
  \label{Fig.1}
\end{figure}

We consider consecutive decision epochs in Fig. 1, where two adjacent epochs are separated by equal duration $T_0$. At the $t$-th transmission block, the received signals at the tag can be expressed as
\begin{equation}
 \label{yb}
 y_b(t) = {\alpha_{st}}{x(t)+w(t)},
\end{equation}
where $x(t)$ is the RF signal transmitted from the ambient RF source, $\alpha_{st}$ denotes the channel power gains between RF source and backscatter-tag, and $w(t) \sim \mathcal{CN}(0, \sigma_w^2)$ denotes the additive white gaussian noise (AWGN) between RF source and tag. At the beginning of each epoch, the tag makes a decision on either operating at signal-backscattering mode or energy-harvesting mode. The circuit block diagram of the tag is illustrated in Fig.~\ref{Fig.2}, $S_1$ and $S_2$ can switch the connection point to change their operating mode in real-time.
\begin{figure}[!t]
 \centering
  \begin{center}
      \includegraphics[width=0.45\textwidth]{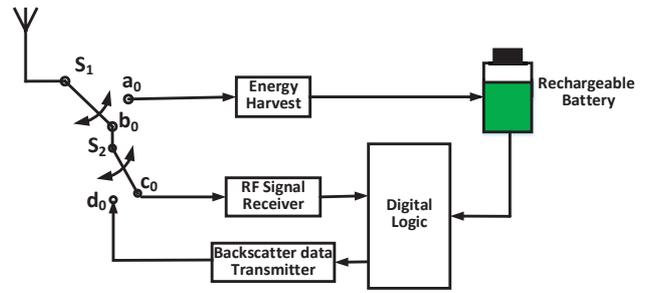}
  \end{center}
 \caption{The circuit block diagram of the tag.}
 \label{Fig.2}
\end{figure}

When the switch $S_1=a_0$, the tag operates in energy harvesting mode. The energy harvesting circuit converts RF signal into direct current (DC) power to charge the battery. The collected energy is used for data transmission or replenishing circuit consumption. The harvested energy can be expressed as
\begin{equation}
 \label{Eg}
 E_t = {\eta}{g_t}{P_t}{T_0},
\end{equation}
where $\eta$ is the battery energy harvesting efficiency and $P_t$ is the fixed power of RF source. $g_t$ denotes the channel power gain in the $t$-th time slot. For simplicity of illustration, we consider a truncated channel gain (e.g, $99\%$ cumulative distribution) $g=|\alpha_{st}|^2$ and quantize it into ($Y+1$) levels $\mathcal{G} = \left\{G_0,G_1, G_2,\cdots, G_Y\right\}$. Therefore, the possible harvested energy can be divided into ($Y+1$) uniform levels, such that
\begin{equation}
 \label{E}
 E_{g(t)} \in \left \{0,1,2,\cdots,Y \right \}\cdot e_0,
\end{equation}
where, $e_0$ denotes the unit energy considered for quantization. Notice that the tag may harvest zero energy when the received signal is too weak.

When $S_1=b_0$ and $S_2=d_0$, the tag switches to signal-backscattering mode. In this case, the energy collected by the tag is approximately zero. The received signal at the receiver, as a combination of signal transmitted by the RF source and backscattered by the tag, is
\begin{equation}
\label{yn}
 \begin{aligned}
  y_r(t) = {\mu}{\alpha_{st}}{\alpha_{tr}}{a(t)}{x(t)}+{\alpha_{sr}}{x(t)}+{w(t)},
 \end{aligned}
\end{equation}
where $\mu$ is the reflection coefficient at the tag, $\alpha_{tr}$ is the channel coefficient from the tag to the receiver that remain fixed in the considered period and $a(t)$ denotes the decision of a backscatter tag in the $t$-th time slot. Generally, the distance from RF source to tag and the distance from RF source to receiver are much larger than the distance between the tag and receiver. We therefore assume that the received signal strengths at the tag and receiver are the same, i.e., $g=f$.

We assume the tag transmits with a fixed data rate $R_b$ bits per second and the sampling rate of the receiver is $N_sR_b$, such that the receiver takes $N_s$ samples of every one-bit transmissions. In the following, we derive the BER of the receiver using an optimal energy detector to decode the received information.
\begin{lemma}
 \rm Let $\delta_0^2$ and ${\delta_1}^2$ represent the variance of an addition noise introduced by the receiver RF circuit and the decoding circuit, respectively. Using an optimal energy detector, denote the BER at the receiver $\epsilon$ can be expressed as
 \begin{equation}
  \label{ber}
  \epsilon = \frac{1}{2}erfc\left [ \frac{(\mu ^2P_{t}gh\sqrt{N_s})}{4({\delta_{0}}^{2}+ {\delta_{1}}^2) }\right ],
 \end{equation}
\end{lemma}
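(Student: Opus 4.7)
The plan is to treat the one-bit detection at the receiver as a binary hypothesis test over the two tag actions $a(t)\in\{0,1\}$, apply a Gaussian approximation to the energy-detector statistic via the central limit theorem, and then invoke the standard $\mathrm{erfc}$ expression for the error probability of a Gaussian mean test.

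First I would write the two hypotheses obtained from \eqref{yn}. Under $H_0$ ($a(t)=0$, non-reflecting state) the $i$-th received sample is $y_r[i]=\alpha_{sr}x[i]+w[i]$, supplemented by the RF-front-end and decoding noise terms with variances $\delta_0^2$ and $\delta_1^2$; under $H_1$ ($a(t)=1$) it becomes $y_r[i]=(\mu\alpha_{st}\alpha_{tr}+\alpha_{sr})x[i]+w[i]$ plus the same noises. Treating $x[i]$ as a stationary ambient signal of average power $P_t$ and assuming the direct-path and backscattered-path phases are independent so that the cross term $2\Re\{\mu\alpha_{st}\alpha_{tr}\alpha_{sr}^{\*}\}P_t$ in $E[|y_r[i]|^2\mid H_1]$ averages to zero, the conditional per-sample mean power equals $fP_t+\delta_0^2+\delta_1^2=gP_t+\delta_0^2+\delta_1^2$ under $H_0$ and $(1+\mu^2 h)gP_t+\delta_0^2+\delta_1^2$ under $H_1$, where I used the modelling assumption $g=f$. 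The separation between the two conditional means is therefore $\Delta=\mu^2 ghP_t$.

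Next I would form the optimal energy-detector statistic $T=\frac{1}{N_s}\sum_{i=1}^{N_s}|y_r[i]|^2$ and compute its variance under each hypothesis. Because $T$ is a sample average of $N_s$ i.i.d.\ non-negative random variables, the central limit theorem gives $T\sim \mathcal{N}(\mu_k,\sigma_k^2/N_s)$ with per-sample variance dominated, in the weak-backscatter regime $\mu\ll 1$, by the combined circuit-noise contribution proportional to $(\delta_0^2+\delta_1^2)^2$; the $1/N_s$ averaging produces the factor $\sqrt{N_s}$ that ultimately appears in \eqref{ber}. With equiprobable hypotheses and Gaussian conditionals of essentially equal variance, the optimal (ML) detector thresholds at the midpoint of the two means, so that $\epsilon=Q\!\bigl(\Delta/(2\sigma_T)\bigr)$; using $Q(x)=\tfrac{1}{2}\mathrm{erfc}(x/\sqrt{2})$ and substituting $\Delta$ and $\sigma_T$ yields the stated closed form.

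The main obstacle is the bookkeeping of the noise sources and the cross term. If one does not argue that $2\Re\{\mu\alpha_{st}\alpha_{tr}\alpha_{sr}^{\*}\}P_t$ averages out, the exponent in $\mathrm{erfc}$ picks up extra $\mu h$-dependent terms in the numerator, and if one does not argue that the per-sample variance is dominated by $(\delta_0^2+\delta_1^2)^2$ rather than by signal$\times$noise beat terms, the denominator acquires additional $gP_t$-dependent pieces. Showing that these secondary terms are negligible (either by weak-backscatter arguments $\mu\ll 1$, or by averaging over random ambient-signal phases) is exactly what collapses the expression to the compact form of \eqref{ber} in which the two circuit noises enter only through the single sum $\delta_0^2+\delta_1^2$.
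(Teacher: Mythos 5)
Your proposal follows essentially the same route as the paper's Appendix A: a binary hypothesis test on the energy statistic $\frac{1}{N_s}\sum_i|y[i]|^2$, a central-limit Gaussian approximation with variance $2(N_0+N_1)^2/N_s$, a midpoint threshold giving $\epsilon=Q(\Delta/(2\sigma_T))$, and the $Q$-to-$\mathrm{erfc}$ conversion. If anything, you are more careful than the paper, which silently drops the cross term $2\mu\alpha_{st}\alpha_{sr}\alpha_{tr}$ and the signal--noise beat contribution to the variance that you explicitly identify and justify discarding.
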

\begin{proof}
 Please refer to Appendix A.
\end{proof}

We denote the BER in the $t$-th time slot as $\epsilon(t)$. Then, the capacity of the binary symmetric channel is
\begin{equation}
\label{C}
 C(t) = 1 + {\epsilon(t)}log(\epsilon(t)) + (1-\epsilon(t))log(1-\epsilon(t)).
\end{equation}
Therefore, the data rate of the backscatter communication in the current time slot is
\begin{equation}
\label{R}
 R(t) = {R_b}{C(t)}{T_0}.
\end{equation}

\subsection{Battery Model}
We quantize the battery capacity $C$ by $e_0$ into $B_{c}$ units, where $B_{c}= C/e_0$ is assumed without loss of generality to be an integer. The tag consumes $j$ units energy for maintaining the basic energy consumption of the circuit when operating on the energy-harvesting mode and $k$ units of energy in signal-backscattering mode, where $1\leq j < k < B_{c}$. At the beginning of epoch $t$, the tag can operate on the signal-backscattering mode only when the energy state $E_c(t)\geq k$. Otherwise, it must harvest enough energy by operating in the energy harvesting mode. We let $a(t)$ denote the operating mode selection, where $a(t)=0$ indicates energy harvesting mode and $a(t)=1$ otherwise. Accordingly, the dynamic of the battery energy $E_c(t+1)$ can be expressed as
\begin{equation}
 \label{Ec}
 E_c(t+1)\!\!=\!\!\left\{
 \begin{aligned}
 &{E_c(t)\!-\!j\!+\!E_g(t)},\!\!\!\!\!\!\!\!\!\!\!\; &{if E_c(t)<k},\\
 &{E_c(t)\!-\!ka(t)\!+\!(1\!-\!a(t))E_g(t)\!\!-\!\!j},\!\!\!\; &{if E_c(t)\geqslant k},
\end{aligned}
\right.
\end{equation}
for $t=0,1,\cdots,N$, where $E_c(0)=E_0$ represents the initial status of the tag battery.

\subsection{Problem Formulation}
As shown in Fig.~\ref{Fig.1}, we intend to maximize the long-term throughput of a tag in a very large number of $N$ time slots. Here, we use $\pi$ to represent a static decision strategy in choosing the operating mode. $R^\pi(t)$ and $E^\pi_c(t)$ denote the achievable data rate and battery energy state as a result of the strategy $\pi$ at the $t$-th time slot. The objective is to find an optimal policy $\pi^*$ to maximize the average throughput. Mathematically, the problem can be formulated as
\begin{equation}
\label{p1}
\begin{aligned}
 & R^{\pi^*}=&&\underset{\pi}{max}\frac{1}{N}\lim_{t\rightarrow \infty }\sum_{t=0}^{N-1}\gamma ^{t}R^{\pi}(t),\\ %\underset{\pi}{max}&&\frac{1}{N}\mathbb{E}[{R^{\pi}(t)}],&&t=1,2,\ldots,N. \\, &&t=0,1,\ldots,N-1.
%& \begin{array}{r@{\quad}r@{}l@{\quad}l}
 &s.t. &&(8)~and~E^{\pi}_c(0) = E_0, \\
 & &&0\leq {E^{\pi}_{c}(t)}\leq {B_c}. \\
%\end{array} .
\end{aligned}
\end{equation}
where $\gamma \in (0,1]$ is the discount factor.

\section{Reinforcement Learning Approach}
\subsection{Markov Decision Process}
Depending on the knowledge of the distribution of ambient RF signal strength, we propose in this section to solve (\ref{p1}) using both optimal model-based Markov decision process method and model-free reinforcement learning method.  When the distribution of the ambient RF signal strength follows a Markov process and is known, the discrete time-slots decision problem in (\ref{p1}) can be described as an MDP. In the following, we define the five major elements of an MDP for solving (\ref{p1}): states ($S$), actions ($\mathcal{A}$), transition probability ($P(s'| s, a)$), immediate reward ($R(s'| s, a)$), and discounter ($\gamma$). First of all, we define a state $s\in \mathcal{S}$ by the $E_c$ unit(s) of current battery energy and the channel gain $g$ in a decision epoch. That is, $S = \left\{(E_c, g),  E_c\in \left\{0,1,...,B_c\right\}e_0, g\in \mathcal{G} \right\}$. Because there are in total $(Y+1)$ discrete energy state, the cardinality of the state space is $|\mathcal{S}| = (B_{c}+1)\times (Y+1)$. The tag takes an action on choosing either energy-harvesting mode or signals-backscattering mode at every decision epoch, which are described as follows
\begin{enumerate}
  %\item \emph {States} : let $\mathcal{S}={s_i}, i=1, 2, \cdots, n$ denote a the state of the considered Markov process. Based on the current state $s_i$, the tag chooses an action $a \in \mathcal{A}$ and transfers into next state $s_{j}$.
  \item \emph {Actions} : a backscatter tag adaptively switches between energy-harvesting mode and signal-backscattering mode based on its current state. Let $\mathcal{A} = \left\{0, 1\right\}$ represent the action set, where $a = 0$ and $a = 1$ denote energy-harvesting and signal-backscattering mode, respectively.
  \item \emph {Rewards} : we define the immediate reward received by the tag as $R(s'|s,a)$ as the amount of information successfully transmitted to the receiver. Here $s$ and $s'$ denote the current state and the state of the next decision epoch. With a bit abuse of notation, we denote $C(s)$ as the channel capacity when the system is at state $s$. Then, the reward is
      \begin{equation}
      \label{rwd}
      R(s'|s,a) = a{R_b}{C(s)}{T_0}.
      \end{equation}
      Notice that a tag may receive immediate reward only when operating in signal-backscattering mode ($a=1$). Operating in energy-harvesting mode (a=0) has no immediate reward, but the energy collected at the current slot can be used to support data transmission in latter slots.
  \item \emph {Transition Probabilities} : the channel state transition probability is assumed to be static throughout all the time slots. We define transition probability matrix $P$ with its elements $P_{ij}=P(s_j| s_i, a)$, as the probability of transiting to $s_j$ when taking an action $a$ at sate $s_i$. With random energy arrival $E_g(t)$, the battery state has been given in (\ref{Ec}). For each state-action pair $(s_i,a)$, it satisfies
  \begin{equation}
  \sum_{j = 1}^{n}{P(s_j| s_i, a)} = 1, \forall~s_i \in \mathcal{S}.
  \end{equation}
\end{enumerate}

We aim to find an optimal policy $\pi^{*}(s_i) \in \mathcal{A}$ for every state $s_i \in \mathcal{S}$, which maximizes the average throughput reward over a long time. Based on the knowledge of transition probability, we can get the global optimal policy with the value iteration algorithm, which is one widely used algorithm for solving discounted MDP problems \cite{1957:Bellman} and detailed as follows.

The value iteration algorithm aims to estimate the expected reward received at each state $s$, denoted by $V(s)$, for all $s\in \mathcal{S}$. In particular, iteration starts with setting $V(s)=0$, for all $s \in \mathcal{S}$, and chooses the next state by taking an local optimal action $a^*$ that maximizes its expected reward in the current stage. The two action in each state $s$ during a iteration will have different immediate reward $R(s'|s,a)$ and discounted future reward. At the end of each iteration, we select the best action for each state and update the reward function $V(s)$ for each state. Overall, the value function $V(s)$ is updated by
\begin{equation}
\label{VI}
V(s) :=\!\underset{a \in \left\{0,1\right\}}{max}[R(s'| s, a)\!+\!\gamma\sum_{s^{'}}P(s^{'}|s,a)V(s^{'})],  \forall s \in \mathcal{S}.
\end{equation}

The iterations proceed until the maximum difference among all the states between two consecutive iterations is less than a certain threshold $\theta$, i.e.,
\begin{equation}
\label{v}
\underset{s \in \mathcal{S}}{max}\left | V^{(l)}(s)-V^{(l-1)}(s)\right | < \theta , \forall s \in \mathcal{S},
\end{equation}
where the superscript denotes the $l$-th iteration. The convergence of the algorithm is guaranteed when sufficient number of iterations are taken \cite{2001:Lzhang}. We denote the value function after convergence as $V^*(s)$. Then, the optimal strategy $\pi_{v}^*$ of value iteration algorithm is therefore
\begin{equation}
\label{vlb}
\pi_{v}^*(s) = {arg} \underset{a \in \left\{0,1\right\}}{max}[R(s'| s, a)\!+\!\gamma\sum_{s^{'}}P(s^{'}|s,a)V^*(s^{'})]
\end{equation}

Through extensive experiments, we observe for the optimal policy that, when the battery energy is low but the channel condition is good, it tends to harvest energy for transmitting data in latter slots. Conversely, when the battery energy is high, it is inclined to transfer data to consume energy to avoid the harvest energy overcharging the battery. When the battery energy of the tag is moderate, it chooses to harvest energy when the channel condition is poor and transmits data when the channel conditions are relatively good. The detailed performance of the value iteration method will be shown in simulations.

\subsection{QL Algorithm}
When the distribution of the ambient RF signal strength is not known, we consider using a QL based online algorithm to find a suboptimal mode selection strategy. In each decision epoch, the tag chooses an action based on the Q-value in a constructed state-action value table subject to constant update upon iterative interactions with the environment. The table is initialized by setting $Q(s,a)=0$, for all states $s \in \mathcal{S}$ and actions $a \in \mathcal{A}$. The iterations start by picking a random state $s\in \mathcal{S}$. To update Q-table, $\epsilon-$greedy method is used for balancing exploration and exploitation. With $\epsilon-$greedy, the tag selects a random action with probability $\epsilon_0$, where $0 \leq \epsilon_0 \leq 1$ and reduces over time. Then, with probability (1-$\epsilon_0$), the entry corresponds to the $(s,a)$ in the Q-table is updated by
\begin{equation}
\label{QL}
Q(s,a) := Q(s,a) + \alpha \underset{a'}{max}[R(s'| s, a) + \gamma Q(s',a') - Q(s,a)],
\end{equation}
where $\alpha$ is a small learning rate.
The tag takes the action $a$ that maximizes (\ref{QL}) and receives immediate reward  $R(s'|s,a)$ in (\ref{rwd}) if $a=1$. After taking the action, the tag observes the next state $s'$ following (\ref{Ec}) and the unknown channel transition probability.

The tag will make better mode selection over time, and after sufficiently long learning period the values in the Q-table will stabilize. We use $\pi_{q}^*$ to represent its the mode selection of QL algorithm after the Q-table stabilizes. The details of QL algorithm is showed in Algorithm 1.
\begin{algorithm}
\caption{QL algorithm in mode selection}
\label{alg}
\KwIn{$S$: All states; $R(s'| s, a)$: Immediate reward matrix; $R_b$: Data rate; $\epsilon_0$: exploration probability; $\gamma$: Discount factor;}
\KwOut{$\pi_{q}^*$;}
Initialize $Q(s, a) = 0$ for all $s \in \mathcal{S}$ and $a \in \mathcal{A}$;
$\alpha \gets $a small learning rate\;
Randomly select a beginning state $s$\;
\For {$t:=1$~\rm to~$\infty$}{
    On the current state $s$, generate a probability $p_0 \sim \mathcal{U}(0,1)$\;
¡¡¡¡\eIf{$p_0 \leq {\frac{1}{\sqrt{t}} \cdot \epsilon_0}$}{take a random action $a(s)=0$ or 1 with equal probability\;}
    {$a(s)$ = {arg} $\underset{a \in \left\{0,1\right\}}{max} Q(s,a)$\;}
    Take action $a(s)$ and observe the reward $R(s'| s, a)$ and next state $s'$\;
    Update the Q-table values using (\ref{QL})\;
    Update $s \gets s'$.
}
\For{each $s \in \mathcal{S}$}
 {
    ${a^{*}(s)} = {arg} \underset{a \in \left\{0,1\right\}}{max} Q(s,a)$\;
 }
\Return $\pi_{q}^*=\left\{a^{*}(s_1),a^{*}(s_2),\cdots,a^{*}(s_{|\mathcal{S}|})\right\}$.
\end{algorithm}

\section{SIMULATION RESULTS}
In this section, we evaluate the performance of the proposed algorithms. In all simulations, we assume the energy harvesting efficiency $\eta=0.8$ and the exploration probability $\epsilon_0=0.2$. Without loss of generality, the duration of each block transmission time $T_0$ is set to 1. In addition, the noise power is assumed as $N_0= 10^{-10}~W$. We set $Y=4$ and the channel levels {$G_0$, $G_1$, $G_2$, $G_3$, $G_4$} are set as $1.5\times10^{-5}$, $3\times10^{-5}$, $4.5\times10^{-5}$, $6\times10^{-5}$, $7.5\times10^{-5}$, respectively. Unless otherwise stated, $h$ is fixed to be a constant during all time slots, i.e., $h=5\times10^{-5}$. $R_b$ equals to $10$ Kbits per second. We set the unit energy as $e_0=\eta G_1 P_t T_0$. Therefore, the harvested energy after quantization is $E_g = i$ units energy when $g=G_i$, where $i=1, \cdots, 4$, and $E_g=0$ if $g=G_0$. Without loss of generality, we assume battery capacity $B_c=9$. Besides, the tags consumes $j=1$ and $k=3$ units of circuit energy when operating in energy harvesting and signal backscattering mode, respectively. The channel transition probability from $G_i$ to $G_j$ in consecutive time slots is denoted as $P_{ij}$, and is showed in Table.~\ref{tab1}.
\begin{table}[!t]
 \begin{center}
  \caption{Transition Probability Matrix}
 \label{tab1}
 \begin{tabular}{c|c|c|c|c|c}
 \toprule
 \diagbox{i}{j} & G0 & G1 & G2 & G3 & G4 \\
 \hline
  G0 & 0.40 & 0.30 & 0.15 & 0.10 & 0.05 \\
 \hline
  G1 & 0.05 & 0.40 & 0.30 & 0.15 & 0.10 \\
 \hline
  G2 & 0.10 & 0.05 & 0.40 & 0.30 & 0.15 \\
 \hline
  G3 & 0.15 & 0.10 & 0.05 & 0.40 & 0.30 \\
 \hline
  G4  & 0.30 & 0.15 & 0.10 & 0.05 & 0.40 \\
 \bottomrule
 \end{tabular}
 \end{center}
\end{table}

As a benchmark method for performance comparison, we consider a greedy policy. Specifically, the tag chooses signal-backscattering mode if it has sufficient energy to transmit information, or energy-harvesting mode otherwise. That is,
\begin{equation}
\label{greedy}
a(t)= \begin{cases}
&0, \text{ if } E_c(t) < k, \\
&1, \text{ if } E_c(t) \geq k.
\end{cases}
\end{equation}
\begin{figure}
 \centering
 \begin{center}
    \includegraphics[width=0.50\textwidth]{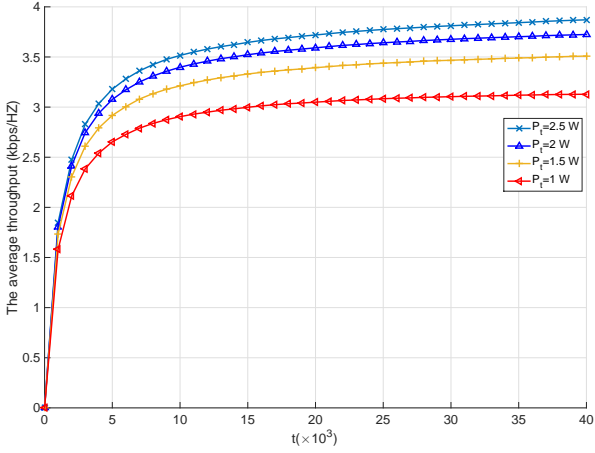}
 \end{center}
 \caption{Throughput performance variation with the number of iterations of the QL method.}
 \label{Fig.3}
\end{figure}
\begin{figure}
 \centering
 \begin{center}
    \includegraphics[width=0.50\textwidth]{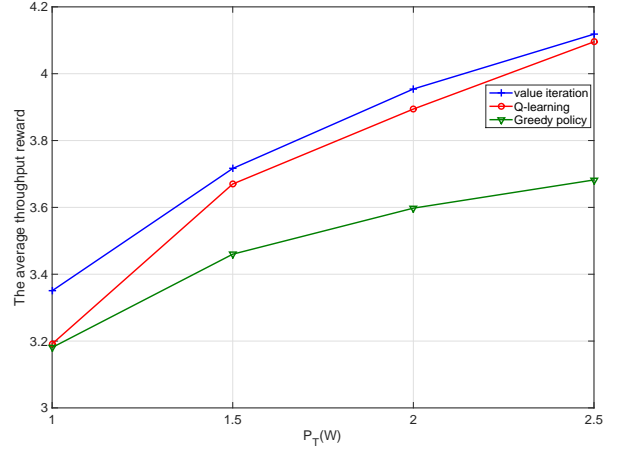}
 \end{center}
 \caption{Throughput performance comparisons of the different methods as a function of RF source power.}
 \label{Fig.4}
\end{figure}
\begin{figure}
 \centering
 \begin{center}
    \includegraphics[width=0.5\textwidth]{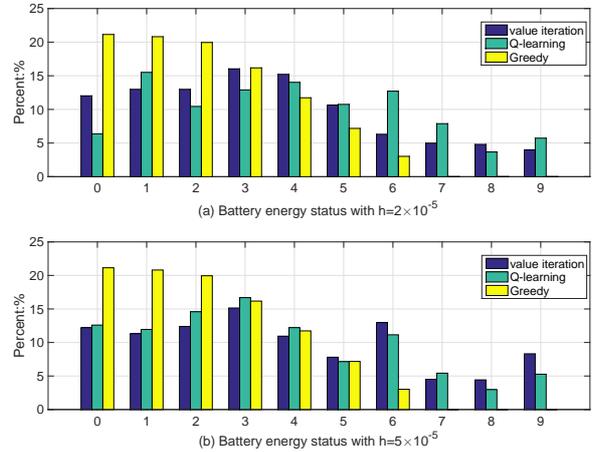}
 \end{center}
 \caption{The distribution of battery energy level in $10000$ simulation time slots. (a) channel power gain $h=2\times10^{-5}$, (b) $h=5\times10^{-5}$.}
 \label{Fig.5}
\end{figure}
We first show in Fig.~\ref{Fig.3} the average throughput achieved by the QL algorithm as the number of iterations. Each point in the figure is a rolling average of the past $10^3$ time slots. The four curves from bottom to top represent the cases where the power of the RF source increases from $1~W$ to $2.5~W$. As expected, a higher source power leads to higher average throughput of the energy harvesting device. Besides, we see that the average throughput performance under different transmit power gradually increases as the iterations proceed, and saturates at around $20\times 10^3$ iterations. In other words, the tag makes better mode operating decisions over time, and the Q-table eventually becomes stable after sufficiently long interaction with the environment.

In Fig.~\ref{Fig.4}, we compare the average throughput performance of three methods: the value iteration, QL and greedy algorithms, when the source transmit power varies from $1$ to $2.5$ Watts. In particular, for the value iteration and QL methods, we use the mode selection strategies after both methods converge. For fair comparisons, we evaluate the three methods in $N=10^4$ time slots, where the channel realizations follow that in Fig.~\ref{Fig.3}. Each point in the figure is the average throughput achieved within the $N$ time slots. It is evident in Fig.~\ref{Fig.4} that the average throughputs increase with $P_t$. The QL algorithm achieves very close throughput performance to the optimal value iteration algorithm, and significantly outperforms the greedy method. Specifically, the performance loss is less than 0.56\% when the transmitter power $P_t=2.5~W$. On average,
the QL method achieves 98.17\% of the optimal throughput performance, and the greedy method achieves 90.33\% of the optimal performance, while the performance gap of the greedy method gradually increases as $P_t$ becomes larger. The throughput performance of the greedy method is worse than that of the other two because it only considers maximizing the current reward while neglecting the significant future reward achievable by operating in energy-harvesting mode. The QL method, although has no knowledge of the channel distribution, achieves close-to-optimal performance when transmit power is large.

In Fig.~\ref{Fig.5}, we simulate the performance of the three mode selection methods in  $10^4$ time slots, and plot the probability distribution of the battery energy levels of the tag during the entire simulation. Here, we consider two different tag-to-receiver channel conditions $h=2 \times 10^{-5}$ and $h=5\times 10^{-5}$. In both cases, we can see that the greedy method results in low battery energy states in both cases, where more than 80\% of the time the tag has less than $3$ units of energy left in the battery and has not even reached $7$ units of energy throughout the simulation. This is due to its greedy nature in exhausting any energy available, thus transmission outage happens frequently when a favorable transmission opportunity occurs, resulting significant loss of data rate. Conversely, the optimal value iteration algorithm results in a much more balanced battery energy distribution in different energy states, such that it leaves sufficient ``energy buffer" for transmitting information when a favorable slot occurs, the QL algorithm closely follows the energy distribution of value iteration algorithm, which shows its ability to jointly consider both current and future data transmission opportunities.

\section{CONCLUSION}
In this paper, we studied the optimal operating mode selection problem in the AB communication system, where the backscattering tag dynamically chooses between energy-harvesting and information backscattering modes to maximize the average throughput. We formulated the problem into an infinite-horizon MDP problem. When the the distribution of the ambient RF signal strength is known, we applied value iteration algorithm to find the optimal decision strategy. Otherwise, when the signal strength distribution is not known, we proposed to employ reinforcement QL algorithm to maximize the long-term average throughput. Finally, our simulations showed that the proposed QL method can achieve close-to-optimal throughput performance and significantly outperforms the benchmark greedy method in the AB communication system.

\section*{Appendix A\\Proof of lemma 1}
Let $D[i]\in \left\{0,1\right\}$ denotes the information bit transmitted in the current time slot, the received signal at the receiving end in backscatter communication system, $y_r[i]$ can be expressed as
\begin{equation}
y_r[i]=\alpha_{sr}x[i] + D[i]\mu \alpha_{st}\alpha_{tr}x[i]+n_0[i],i=1,\cdots,N_s.
\end{equation}
where $D[i]$ denotes the binary information bits, $n_0 \sim \mathcal {CN}(0 ,N_0)$, and the signal at information decoder is
\begin{equation}
y[i]= y_r[i]+n_1[i],i=1,\cdots,N_s,
\end{equation}
where $n_1 \sim \mathcal {CN}(0 ,N_1)$, the average power harvested in the corresponding $N_s$ symbol is
\begin{equation}
E[\frac{1}{N_s} \sum_{i=1}^{N_s}|y[i]|^2] = (P_t|\alpha_{sr}+D[i]\mu\alpha_{st}\alpha_{tr}|^2+N_0)+N_1.
\end{equation}

It can be clearly shown that the following equalities hold
\begin{equation}
E[\sum_{i=1}^{N_s}[n_1[i]^2] = {N_s}{N_1},\quad
Var[\sum_{i=1}^{N_s}[n_1[i]^2] = 2{N_s}{N_1^2}.
\end{equation}
When $N_s$ is sufficiently large, by the central limit theorem, the test statistic $Z = \frac{1}{N_s} \sum \limits_{i=1}^{N_s} |y[i]|^2$ for both cases can be expressed as
\begin{equation}
\label{twodis}
 \begin{aligned}
&D[i]=0\!:Z\!\sim \mathcal {N_s}(P_th\!+\!N_0\!+\!N_1, \frac{2{(N_0\!+\!N_1)}^2}{N_s}), \\
&D[i]=1\!:Z\!\sim \mathcal{N_s}(P_t|\alpha_{sr}\!+\!\mu\alpha_{st}\alpha_{tr}|^2\!+\!N_0\!+\!N_1,\frac{2{(N_0\!+\!N_1)}^2}{N_s}).
 \end{aligned}
\end{equation}

By defining $Z_1 = Z - P_tf-N_0-N_1$, we have
\begin{small}
\begin{equation}
\label{twodis}
 \begin{aligned}
D[i]=0:Z_1 \sim \mathcal {N_s}(&0,\frac{2(N_0+N_1)^2}{N_s}),\\
D[i]=1:Z_1 \sim \mathcal{N_s}(&P_t|\mu^2gh + 2\mu\alpha_{st}\alpha_{sr}\alpha_{tr}|, \frac{2(N_0+N_1)^2}{N_s}).\\
 \end{aligned}
\end{equation}
\end{small}
We assume that `0' and `1' are transmitted with equal probability. Thus, the bit error probability (BER) $\epsilon$ can be obtained as
\begin{equation}
\begin{aligned}
 \epsilon &=\frac{1}{2}(P_r(\hat{D}(i)=0|D(i)=1)+P_r(\hat{D}(i)=1|D(i)=0)\\
 &=P_r(P_t|\frac{1}{2}\mu^2gh+\mu\alpha_{st}\alpha_{sr}\alpha_{tr}|)\\
 &=Q({\frac{\mu^2{g}{h}P_t\sqrt{N_s}}{2\sqrt{2}(N_0+N_1)}})=\frac{1}{2}erfc(\frac{\mu^2{g}{h}P_t\sqrt{N_s}}{4({\delta_0}^2+{\delta_1}^2)}).
\end{aligned}
\end{equation}
where $Q(\cdot)$ is the Gaussian $Q$-function, which is defined as
\begin{equation}
\begin{aligned}
Q(x) &=\frac{1}{\sqrt{2\pi}}\int_{x}^{\infty}exp(-\frac{t^2}{2})dt.\\
\end{aligned}
\end{equation}

%\section*{Acknowledgment}

%\begin{IEEEbiography}{Michael Shell}
%Biography text here.
%\end{IEEEbiography}
%\begin{IEEEbiographynophoto}{John Doe}
%Biography text here.
%\end{IEEEbiographynophoto}
%\begin{IEEEbiographynophoto}{Jane Doe}
%Biography text here.
%\end{IEEEbiographynophoto}
\end{document}